\theoremstyle{plain}
\newtheorem{theorem}{Theorem}[section]
\newtheorem{lemma}[theorem]{Lemma}
\newtheorem{proposition}[theorem]{Proposition}
\theoremstyle{definition}
\newtheorem{definition}[theorem]{Definition}
\theoremstyle{remark}
\newtheorem{remark}[theorem]{Remark}
\author{Julio Cesar Jaramillo Quiceno\thanks{jcjaramilloq@unal.edu.co}}
\affil{
 \textsuperscript{1}Departamento de Física, Universidad Nacional de Colombia, Edificio Yu Takeuchi, Bogotá
}
\date{}
\title{$q$-Heisenberg Algebra in $\otimes^{2}-$Tensor Space}
\begin{document}
\maketitle
\begin{abstract}
In this paper, we introduce the $q$-Heisenberg algebra in the tensor product space $\otimes^2$. We establish its algebraic properties and provide applications to the theory of non-monogenic functions. Our results extend known constructions in $q$-deformed algebras and offer new insights into functional analysis in non-commutative settings.
\end{abstract}

\section{Introduction}
\subsection{Preliminaries}
\subsubsection{$\otimes^{2}$-tensor space}\label{tensor-2}
Building on the foundational work of Vakarchuk \cite{Vakarchuk2005}, we introduce and study the $q$-Heisenberg algebra within the framework of the $\otimes^{2}$-tensor space. This construction generalizes the conventional Heisenberg algebra, a mathematical structure central to quantum mechanics, incorporating a deformation parameter $q$. This parameter introduces a richer algebraic framework, allowing the description of more general quantum systems and providing a pathway to explore non-commutative phenomena, which are essential in modern theoretical physics and mathematics. The $\otimes^{2}$-tensor space provides a natural and powerful setting for this investigation. It allows for a clear and systematic representation of the algebraic relations and symmetries inherent in the $q$-deformed Heisenberg algebra. 
 \begin{definition}\rm{\cite{Nielsen-Chuang2010}}
In a formal setting, given a Hilbert space $\mathcal{H}$ that represents the state space of a quantum system, the \textit{\textbf{quantum $\otimes^{2}$-tensor space}}, denoted as $\mathcal{H} \otimes \mathcal{H}$, is the space of all possible tensor products of elements from $\mathcal{H}$ with themselves. This space can be understood as follows:
\begin{enumerate}
    \item[(i)] The tensor product operation $\otimes$ combines two vectors from $\mathcal{H}$ into a new vector in $\mathcal{H} \otimes \mathcal{H}$. If $|\psi\rangle, |\phi\rangle \in \mathcal{H}$, then the tensor product $|\psi\rangle \otimes |\phi\rangle$ represents a composite quantum state encompassing the information of both $|\psi\rangle$ and $|\phi\rangle$.
    \item[(ii)]  The resulting space $\mathcal{H} \otimes \mathcal{H}$ retains several properties crucial for quantum mechanics, including:
    \begin{itemize}
        \item[(ii.a)]  \textbf{Dimension}: If $\mathcal{H}$ has dimension $n$, then $\mathcal{H} \otimes \mathcal{H}$ has dimension $n^2$.
        \item[(ii.b)]  \textbf{Entanglement}: The states in this tensor space can exhibit entanglement, which is a fundamental aspect of quantum mechanics. For example, the state $\frac{1}{\sqrt{2}} (|\psi_1\rangle \otimes |\phi_2\rangle + |\psi_2\rangle \otimes |\phi_1\rangle )$ cannot be factored into product states.
    \end{itemize}
    \end{enumerate}
 \end{definition}
 In quantum mechanics, the $\otimes^{2}$-tensor space plays a critical role in the description of composite systems and entangled states. In this work presents a concise description of the quantum $\otimes^{2}$-tensor space, including references to essential sources for further reading.

\begin{definition}\rm{\cite{Shankar1980}}
In quantum mechanics, if we consider two quantum systems $A$ and $B$, their respective state spaces are Hilbert spaces, commonly denoted as $\mathcal{H}_{A}$ and $\mathcal{H}_{B}$. The combined system $A \otimes B$ is represented by the tensor product
\begin{equation*}
\mathcal{H}_{A \otimes B} = \mathcal{H}_{A} \otimes \mathcal{H}_{B}.
\end{equation*}

This space is known as \textit{\textbf{the quantum tensor space}} or bipartite $\otimes^{2}$-tensor space.
\end{definition}

\begin{definition}\rm{\cite{Cohen-Tannoudji-Diu-Laloe2020}}
    If $A$ and $B$ have basis vectors $\{|a_{i}\rangle\}$ and $\{|b_{j}\rangle\}$, respectively, then the basis vectors of $\mathcal{H}_{A \otimes B}$ are of the form $\{|a_{i}\rangle \otimes |b_{j}\rangle\}$. If $\dim(\mathcal{H}_{A}) = m$ and $\dim(\mathcal{H}_{B}) = n$, then $\dim(\mathcal{H}_{A \otimes B}) = m \times n$.
\end{definition}
\begin{definition}\rm{\cite{Nielsen-Chuang2010}}
This quantum $\otimes^{2}$-tensor space allows for the existence of entangled states, where the states of $A$ and $B$ are correlated such that they cannot be expressed as a simple product state $|a\rangle \otimes |b\rangle$. 
\end{definition}

\begin{definition}\rm{\cite{Fauser2000, Yuri-Manin2018}}\label{manim-def}
    Let \( q \) be a nonzero complex number (or a parameter in a general algebraic setting). The \textit{\textbf{quantum plane}} is defined as the associative algebra generated by two variables \( \hat{x}_{j} \) and \( \hat{x}_{k} \), subject to the commutation relation:
\begin{equation}\label{manin-jk}
\hat{x}_{j}\hat{x}_{k} = q\hat{x}_{k}\hat{x}_{j}.
\end{equation}

This algebra represents a deformation of the classical plane, which is recovered when \( q = 1 \). Now, let us consider the matrix 
    \begin{equation}\label{matrix2}
      \mathcal{M}_{q}(2) = 
      \begin{bmatrix}
            a & b\\
            c & d
    \end{bmatrix}.
    \end{equation}
    \textit{\textbf{The quantum determinant}} $Det_{q}\mathcal{M}_{q}(2)$ can be written as $ad-qcb$.
On other hand, let us consider $V$ as a vector space. The dimension of $V$ is defined as the number of linearly independent elements in $V$. Consider the space $\otimes^{2} V $ of degree-two tensors and impose a condition on elements $\hat{x}_{j}$ and $\hat{x}_{k}$ that span $V$, that is, $V = \langle \hat{x}_{j}, \hat{x}_{k} \rangle$, such that
\begin{equation}\label{jk-xxtensor}
    \hat{x}_{k} \otimes \hat{x}_{j}= q\hat{x}_{j} \otimes \hat{x}_{k}.
\end{equation}

Typically, the tensor product symbol is omitted for simplicity. Next, we consider the endomorphisms of $V$ that preserve condition $\hat{x}_{k}\otimes \hat{x}_{j}=q\hat{x}_{j}\otimes \hat{x}_{k}$, where $\mathcal{M}_{q}$ is given by (\ref{matrix2}) and $\vec{x}, \vec{x}^{\prime} \in V$, such that:
\begin{equation}
    \mathcal{M}_{q}\vec{x} = \vec{x}^{\prime}.
\end{equation}

This implies the requirement:
\begin{equation}
    \hat{x}^{\prime}_{k} \otimes \hat{x}_{j}^{\prime} = q \hat{x}_{j}^{\prime}\otimes \hat{x}^{\prime}_{k}.
\end{equation}

The constraints imposed on the quantum plane $V$ lead to specific relations among the matrix elements of $\mathcal{M}_{q}$ with respect to the basis of $V$ and its dual $V^{*}$. These matrix elements are necessarily non-commutative. Naturally, we have $V^{*} = \langle \frac{\partial}{\partial x_{j}}, \frac{\partial}{\partial x_{k}} \rangle$, and the dual quantum plane satisfies the relation:
\begin{equation}\label{momentum-tensor}
    \frac{\partial}{\partial x_{j}} \otimes \frac{\partial}{\partial x_{k}} = q^{-1} \frac{\partial}{\partial x_{k}} \otimes \frac{\partial}{\partial x_{j}}.
\end{equation}

Thus, this space forms the canonical dual space of $V$.    
\end{definition}

\begin{definition}\rm{\cite{Jaramillo2024}}
    Let us consider the real  algebra $\mathcal{B}_{p}$ is a real vector space with $2^{p}$ basis elements $ \mathrm{e}_{1},\dotsc , \mathrm{e}_{2^{p}-1}$, and let $\left\lbrace \mathrm{e}_{1} ,\mathrm{e}_{2},\mathrm{e}_{3},\dotsc ,\mathrm{e}_{p}\right\rbrace$ be a basis of $\mathbb{R}^{p}$. The multiplication in $\mathcal{B}_{p}$ are given by the rules
\begin{align}\label{rel1+a}
\mathrm{e}_{j}\mathrm{e}_{k}+q_{jk}\mathrm{e}_{k}\mathrm{e}_{j} & = \  \delta_{jk}\\
\label{rel1a}\mathrm{e}_{j}\mathrm{e}_{k}+\mathrm{e}_{k}\mathrm{e}_{j} & = \  2(1+q_{jk})\quad  j , k = 1,2,\dotsc , p,
\end{align}
being $q_{jk}$ 
\begin{equation}\label{rel2b}
q_{jk} =
\begin{cases}
-1 & j\neq k,\\
0 & j=k
\end{cases}.
\end{equation}
\end{definition}

\begin{definition}\rm{\cite{Jaramillo2024, Krasnov1999}}
The real Clifford algebra $\mathcal{A}_{m}$ is a real vector space with $2^{m}$ basis elements $\mathbf{e}_{0}, \mathbf{e}_{1},\dotsc , \mathbf{e}_{2^{m}-1}$, defined by
\begin{align*}
\mathbf{e}_{0}\equiv e_{0}=1, \mathbf{e}_{1} & = \ e_{1},\dotsc ,\mathbf{e}_{m}=e_{m},\\
\mathbf{e}_{12}=e_{1}e_{2}, \mathbf{e}_{13}=e_{1}e_{3}, \dotsc , \mathbf{e}_{m-1,m} & = \ e_{m-1}e_{m},\dotsc , \mathbf{e}_{12...m}=e_{1}e_{2}\dotsc e_{m},
\end{align*}

and let $\left\lbrace \mathbf{e}_{0},\mathbf{e}_{1} ,\dotsc,\mathbf{e}_{12},\mathbf{e}_{13},\dotsc,\mathbf{e}_{m-1,m},\mathbf{e}_{12\dotsc m}\right\rbrace$ be a basis of $\mathbb{R}^{m}$. The multiplication in $\mathcal{A}_{m}$ is given by the  rule
\begin{equation}\label{clifford}
e_{\alpha}e_{\beta}+e_{\beta}e_{\alpha} = -2\delta_{\alpha\beta}e_{0}, \quad  \alpha , \beta = 1,2,\dotsc, m.
\end{equation}
\end{definition}

\begin{definition}\rm{\cite{Jaramillo2024, Jaramillo2024-momento, Krasnov1999, Brackx1982, Ryan1996}}
The generalization of Cauchy - Riemann operator is given by 
\begin{equation}\label{Cauchy-Riemann}
\mathcal{D}=\sum\limits_{\beta =0}^{m}\mathbf{e}_{\beta}\frac{\partial}{\partial x_{\beta}}=\frac{\partial}{\partial x_{0}}+\sum\limits_{\beta =1}^{m}\mathbf{e}_{\beta}\frac{\partial}{\partial x_{\beta}},
\end{equation}

the second term correspond to \textit{Dirac Operator},  which we will denote by $D$ and is defined as

\begin{equation}\label{explicit-Dirac-operator}
D=\sum\limits_{\beta =1}^{m}\mathbf{e}_{\beta}\frac{\partial}{\partial x_{\beta}}.
\end{equation}
\end{definition}

 \begin{definition}\rm{\cite{Jaramillo2024}}
Let $\mathrm{e}_{1} ,\mathrm{e}_{2}, \mathrm{e}_{3},...,\mathrm{e}_{p},$ be elements that satisfy (\ref{rel1+a}). The difference operator  $\textit{\textbf{D}}$ is defined as
\begin{equation}\label{op}
\textit{\textbf{D}}=\mathrm{e}_{j}\frac{\partial}{\partial x_{k}}+\mathrm{e}_{k}\frac{\partial}{\partial x_{j}},
\end{equation}

which is subject to
\begin{multline}
\mathrm{e}_{j}\frac{\partial}{\partial x_{k}}\mathrm{e}_{k}\frac{\partial}{\partial x_{j}}+\mathrm{e}_{k}\frac{\partial}{\partial x_{j}}\mathrm{e}_{j}\frac{\partial}{\partial x_{k}} 
=\\
-\mathrm{e}_{j}\frac{\partial}{\partial x_{k}}\left[	\left(\mathrm{e}_{j}\frac{\partial}{\partial x_{k}}\right)\delta_{jk}\right]-\mathrm{e}_{k}\frac{\partial}{\partial x_{j}}\left[\left(\mathrm{e}_{k}\frac{\partial}{\partial x_{j}}\right)\delta_{jk}\right]
-(1-q_{jk})\left(\frac{\partial^{2}}{\partial x^{2}_{k}}+\frac{\partial^{2}}{\partial x_{j}^{2}}\right).
\end{multline}
\end{definition}

\begin{definition}\rm{\cite{Vakarchuk2005}}\label{Heis-f}
    Let $\hat{x}_{j}, \hat{x}_{k}, \hat{p}_{j}, \hat{p}_{k}$ be the position and momentum operators, and let $f$ be a function dependent on the particle's coordinates. The deformed Heisenberg algebra is subject to the following relations: 
    \begin{equation}\label{def-heise}
        [\hat{x}_{j}, \hat{x}_{k}]=0, \quad [\hat{x}_{j}, \hat{p}_{k}]=i\hbar\delta_{j,k}f, \quad [\hat{p}_{j}, \hat{p}_{k}]=-i\hbar\left(\frac{\partial f}{\partial x_{j}}\hat{p}_{k}-\frac{\partial f}{\partial x_{k}}\hat{p}_j\right), \quad j,k=1,2,3.
    \end{equation}
    where $\hbar$ is a Planck's constant.
\end{definition}

This paper are organized as follows: in section 2, we present the deformed Heisenberg algebra inn the $\otimes^{2}$-tensor space. In the final section, we presents  the application to non-monogenic function.

\section{Deformed Heisenberg algebra in the $\otimes^{2}$-Tensor space}

\begin{definition}\label{Tensor-Heis}
Let $\hat{x}_{j}, \hat{x}_{k}, \hat{p}_{j}, \hat{p}_{k}$ be elements of the vector space $V$. Over the space $\otimes^{2}V$ the \textit{\textbf{deformed Heisenberg algebra in the \texorpdfstring{$\otimes^{2}$}{Lg}-tensor space}} is defined by the following relations:
\begin{align}\label{tensor-heis+1}
    \hat{x}_{j}\otimes\hat{x}_{k}-q^{-1}\hat{x}_{k}\otimes\hat{x}_{j} & = \ 0, \quad \hat{p}_{j}\otimes\hat{p}_{k}-q^{-1}\hat{p}_{k}\otimes\hat{p}_{j} = 0,\\
  \notag\hat{x}_{j}\otimes\hat{p}_{k}-q\hat{p}_{k}\otimes\hat{x}_{j} &  = \  i\hbar\delta_{jk}.
\end{align}
\end{definition}
The following definition is a particular case of Definition \ref{Tensor-Heis} in terms of a function $f$ that depends on the particle coordinates $x_{j}, x_{k}$ based of the Vakarchuk \cite{Vakarchuk2005}.


\begin{definition}\rm{\cite{Vakarchuk2005}}
    Let \( f \) be a function depending on the coordinates of the particle \( x_{j} \) and \( x_{k} \). From Definition \ref{Heis-f}, we define the version for the \textit{\textbf{\( q \)-Heisenberg algebra in the \( \otimes^{2} \)-space}} in terms of the function \( f \), generated by the operators \( \hat{x}_{j} \), \( \hat{x}_{k} \), \( \hat{p}_{j} \), and \( \hat{p}_{k} \), through the following relations:
    \begin{align}
        \label{R1} 
        \hat{x}_{j} \otimes \hat{x}_{k} &= q^{-1} \hat{x}_{k} \otimes \hat{x}_{j}, \\
        \hat{x}_{j} \otimes \hat{p}_{k} - q_{jk} \hat{p}_{k} \otimes \hat{x}_{j} &= -i\hbar \delta_{jk} f, \\
        \label{R2} 
        \hat{p}_{j} \otimes \hat{p}_{k} - q^{-1} \hat{p}_{k} \otimes \hat{p}_{j} &= -i\hbar \left( \frac{\partial f}{\partial x_{j}} \otimes \hat{p}_{k} - q^{-1} \frac{\partial f}{\partial x_{k}} \otimes \hat{p}_{j} \right).
    \end{align}
\end{definition}

\begin{lemma}
    If $f=1$ then the relations \eqref{tensor-heis+1} are obtained
\end{lemma}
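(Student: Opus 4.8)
The plan is to prove the statement by direct substitution of the constant function $f\equiv 1$ into the three defining relations \eqref{R1}--\eqref{R2} and to check that each one collapses onto the corresponding relation in \eqref{tensor-heis+1}. Since the claim is an identity-level reduction rather than an existence or structural result, no representation-theoretic input should be required; all of the work lies in tracking how $f$ and its partial derivatives enter each relation.

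First I would observe that the position--position relation \eqref{R1} does not involve $f$ at all, so it coincides verbatim with the first relation of \eqref{tensor-heis+1}. Next, for the momentum--momentum relation \eqref{R2}, I would use that a constant $f$ satisfies $\partial f/\partial x_{j}=\partial f/\partial x_{k}=0$; the entire right-hand side therefore vanishes, leaving $\hat{p}_{j}\otimes\hat{p}_{k}-q^{-1}\hat{p}_{k}\otimes\hat{p}_{j}=0$, which is exactly the second relation of \eqref{tensor-heis+1}. These two reductions are immediate and require no further computation.

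The remaining step is the mixed position--momentum relation, and it is the only one that is not purely mechanical. Setting $f=1$ reduces its right-hand side to $-i\hbar\,\delta_{jk}$, and the Kronecker delta localizes the content of the relation to the diagonal case $j=k$. Here I would invoke the specialization of the parameter $q_{jk}$ recorded in \eqref{rel2b} to evaluate the coefficient on the diagonal, and to reconcile the parameter and sign conventions so that the relation can be read in the form $\hat{x}_{j}\otimes\hat{p}_{k}-q\,\hat{p}_{k}\otimes\hat{x}_{j}=i\hbar\,\delta_{jk}$ appearing in \eqref{tensor-heis+1}.

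I expect this last reconciliation to be the main, though still minor, obstacle: the two definitions are written with $q_{jk}$ versus $q$ and with opposite signs on the $i\hbar$ term, so the argument must show that on the support of $\delta_{jk}$ these conventions in fact agree. Once that is settled, the three reductions together reproduce all of \eqref{tensor-heis+1}, and the lemma follows.
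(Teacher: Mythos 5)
Your reductions of the position--position relation \eqref{R1} and the momentum--momentum relation \eqref{R2} are correct and agree in substance with the paper, whose proof likewise rests on the observation $\frac{\partial}{\partial x_{j}}(1)=\frac{\partial}{\partial x_{k}}(1)=0$ together with the realization $\hat{p}_{j}=-i\hbar\frac{\partial}{\partial x_{j}}$, $\hat{p}_{k}=-i\hbar\frac{\partial}{\partial x_{k}}$, through which those two relations collapse to the quantum-plane relations \eqref{jk-xxtensor} and \eqref{momentum-tensor}.

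The genuine gap is your third step. You propose to reconcile $\hat{x}_{j}\otimes\hat{p}_{k}-q_{jk}\hat{p}_{k}\otimes\hat{x}_{j}=-i\hbar\delta_{jk}$ with the mixed relation of \eqref{tensor-heis+1}, namely $\hat{x}_{j}\otimes\hat{p}_{k}-q\,\hat{p}_{k}\otimes\hat{x}_{j}=i\hbar\delta_{jk}$, by invoking the specialization \eqref{rel2b}; but that specialization makes matters worse, not better. The $q_{jk}$ of \eqref{rel2b} is the structure constant of the algebra $\mathcal{B}_{p}$, taking the values $q_{jj}=0$ and $q_{jk}=-1$ for $j\neq k$. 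On the diagonal, which is exactly the support of $\delta_{jk}$, the value $q_{jj}=0$ reduces the relation to $\hat{x}_{j}\otimes\hat{p}_{j}=-i\hbar$, which cannot be identified with $\hat{x}_{j}\otimes\hat{p}_{j}-q\,\hat{p}_{j}\otimes\hat{x}_{j}=i\hbar$ for a generic deformation parameter $q$; off the diagonal, $q_{jk}=-1$ would force $q=-1$. The sign flip $-i\hbar$ versus $+i\hbar$ likewise cannot be made to ``agree'': it is a discrepancy between the conventions of the two definitions, and no computation on the support of $\delta_{jk}$ removes it. The only tenable reading is that $q_{jk}$ in the mixed $f$-relation is simply the scalar $q$ (the subscript being notational) and that the sign mismatch is a typographical inconsistency between the two displays; this has to be stated as an emendation of conventions, not derived. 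Note that the paper's own proof quietly sidesteps the issue: it verifies only the position and momentum relations and never checks the mixed one. Your substitution strategy is therefore more complete in intent, but the reconciliation you defer as a ``minor obstacle'' is precisely where the lemma, as literally stated, fails.
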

\begin{proof}
   The proof is completed by demonstrating that $\frac{\partial}{\partial x_{j}}(1) = \frac{\partial}{\partial x_{k}}(1) = 0$. This leads to the relations (\ref{momentum-tensor}) and (\ref{jk-xxtensor}), which are obtained when \( f = 1 \). These results follow from the definitions of the momentum operators $\hat{p}_{j} = -i\hbar\frac{\partial}{\partial x_{j}}$ and $\hat{p}_{k} = -i\hbar\frac{\partial}{\partial x_{k}}$.
\end{proof}

\section{Application to non-monogenic functions}

In the following section, we will study this deformation of the Heisenberg algebra in tensor space, where the functions are defined within the Clifford algebra, specifically for the non-monogenic case.

\begin{proposition}\rm{\cite{Jaramillo2024}}\label{prop}
 Let $f$ be left non-monogenic function of the form $f=f_{j}(x_{j},x_{k})\mathbf{e}_{j}+f_{k}(x_{j},x_{k})\mathbf{e}_{k}$ (the Dirac operator $Df\neq 0$). For this case, the $q$- Heisenberg algebra in the $\otimes^{2}$-space is given by the following relations
 \begin{align}
     \label{ro1}\hat{x}\otimes\hat{x}_{k}& = \ q^{-1}\hat{x}_{k}\otimes\hat{x},\quad \hat{x}_{j}\otimes\hat{x} =  q^{-1}\hat{x}\otimes\hat{x}_{j},\\
     \label{ro2}\hat{x}\otimes\hat{p}_{k}-q_{jk}\hat{p}_{k}\otimes\hat{x}& = \ -i\hbar\delta_{j,k}f, \quad\hat{x}_{j}\otimes\hat{p}-q_{jk}\hat{p}\otimes\hat{x}_{j}=-i\hbar\delta_{j,k},\\
     \label{ro3}\hat{p}\otimes\hat{p}_{k}-q^{-1}\hat{p}_{k}\otimes\hat{p} & = \ -i\hbar\left(Df\otimes\hat{p}_{k}-q^{-1}\frac{\partial f}{\partial x_{k}}\otimes\hat{p}\right),\\
     \label{ro4}\hat{p}_{j}\otimes\hat{p}-q^{-1}\hat{p}\otimes\hat{p}_{j} & = \ -i\hbar \left(\frac{\partial f}{\partial x_{j}}\otimes\hat{p}-q^{-1}Df\otimes\hat{p}_{j}\right).
 \end{align}
\end{proposition}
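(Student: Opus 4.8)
The plan is to derive \eqref{ro1}--\eqref{ro4} by specializing the general $q$-Heisenberg relations \eqref{R1}--\eqref{R2} to the Clifford-valued symbol $f=f_{j}\mathbf{e}_{j}+f_{k}\mathbf{e}_{k}$, and by reading the unsubscripted operators as the Clifford--Dirac momentum and position $\hat{p}=\mathbf{e}_{j}\hat{p}_{j}+\mathbf{e}_{k}\hat{p}_{k}=-i\hbar D$ and $\hat{x}=\mathbf{e}_{j}\hat{x}_{j}+\mathbf{e}_{k}\hat{x}_{k}$, with $\hat{p}_{\beta}=-i\hbar\,\partial/\partial x_{\beta}$ and $D$ the Dirac operator \eqref{explicit-Dirac-operator}. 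Throughout I would treat each Clifford unit $\mathbf{e}_{\beta}$ as a constant that passes through both tensor slots, so that the only two sources of noncommutativity are the quantum-plane factor $q$ and the Clifford rule \eqref{clifford}.

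First I would settle the position sector \eqref{ro1}. Because $\hat{x}$ is linear in $\hat{x}_{j}$ and $\hat{x}_{k}$, each of which already obeys the quantum-plane relation \eqref{R1} (equivalently \eqref{jk-xxtensor}), expanding $\hat{x}\otimes\hat{x}_{k}$ and $\hat{x}_{j}\otimes\hat{x}$ and applying \eqref{R1} slot by slot yields both identities in \eqref{ro1} at once. No derivative of $f$ enters, so this step is purely formal.

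Next, for the mixed sector \eqref{ro2} I would begin from the middle relation of \eqref{R1}--\eqref{R2}, namely $\hat{x}_{j}\otimes\hat{p}_{k}-q_{jk}\hat{p}_{k}\otimes\hat{x}_{j}=-i\hbar\delta_{jk}f$, and perform the two promotions separately. Substituting $\hat{x}_{j}\mapsto\hat{x}$ retains the source term $-i\hbar\delta_{jk}f$ and gives the first identity in \eqref{ro2}. Substituting $\hat{p}_{k}\mapsto\hat{p}=\mathbf{e}_{j}\hat{p}_{j}+\mathbf{e}_{k}\hat{p}_{k}$ instead forces one to expand the tensor products slot by slot and to evaluate the Clifford products $\mathbf{e}_{\beta}f$ via \eqref{clifford}, after which I would collect the scalar part to recover the source $-i\hbar\delta_{jk}$ of the second identity. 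This expansion, in which the explicit decomposition $f=f_{j}\mathbf{e}_{j}+f_{k}\mathbf{e}_{k}$ is indispensable, is the first place where the Clifford bookkeeping becomes delicate.

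Finally, the momentum sector \eqref{ro3}--\eqref{ro4} is the heart of the statement. The key observation I would use is that promoting a component momentum to the Dirac momentum promotes the matching partial derivative to the full Dirac operator: since $\hat{p}=-i\hbar D$, the definition \eqref{explicit-Dirac-operator} gives $\sum_{\beta}\mathbf{e}_{\beta}\,\partial f/\partial x_{\beta}=Df$, so that replacing $\hat{p}_{j}\mapsto\hat{p}$ in \eqref{R2} sends $\partial f/\partial x_{j}\mapsto Df$ and produces \eqref{ro3}, while the symmetric replacement $\hat{p}_{k}\mapsto\hat{p}$ sends $\partial f/\partial x_{k}\mapsto Df$ and produces \eqref{ro4}. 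I expect the main obstacle to be exactly this bookkeeping: one must propagate the Clifford ordering and the deformation factors $q^{-1}$ through the tensor product simultaneously, check that the bivector cross-terms $\mathbf{e}_{j}\mathbf{e}_{k}$ generated by $Df$ recombine consistently on both sides, and confirm that it is precisely the non-monogenicity $Df\neq 0$ that prevents the right-hand sides of \eqref{ro3}--\eqref{ro4} from collapsing, in contrast with the monogenic case where they would vanish.
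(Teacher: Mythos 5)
Your reading of the unsubscripted operators is not the paper's, and under your reading the very first step fails. In the paper's proof, $\hat{x}$ and $\hat{p}$ are \emph{single-component} objects manufactured by multiplying the scalar relations \eqref{R1}--\eqref{R2} by one Clifford unit---$\mathbf{e}_{j}$ on the left (giving $\hat{x}=\mathbf{e}_{j}\hat{x}_{j}$, $\hat{p}=\mathbf{e}_{j}\hat{p}_{j}$) or $\mathbf{e}_{k}$ on the right (giving $\hat{x}=\hat{x}_{k}\mathbf{e}_{k}$, $\hat{p}=\mathbf{e}_{k}\hat{p}_{k}$)---and then transferring that unit through the tensor slot; this context-dependent meaning is confirmed by the theorem that follows the proposition, which explicitly sets $\hat{p}=-i\hbar\mathbf{e}_{j}\frac{\partial}{\partial x_{j}}$ in one relation and $\hat{p}=-i\hbar\mathbf{e}_{k}\frac{\partial}{\partial x_{k}}$ in the other. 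Likewise, the symbol $Df$ on the right of \eqref{ro3} arises in the paper as the single term $\mathbf{e}_{j}\frac{\partial f}{\partial x_{j}}$ (and as $\mathbf{e}_{k}\frac{\partial f}{\partial x_{k}}$ in \eqref{ro4}); no summation over Clifford units ever occurs in the paper's argument. Your identifications $\hat{x}=\mathbf{e}_{j}\hat{x}_{j}+\mathbf{e}_{k}\hat{x}_{k}$ and $\hat{p}=-i\hbar D$ are therefore a genuinely different (and, to be fair, more natural-looking) interpretation, but they do not reproduce the stated relations.

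Concretely, with your full-sum $\hat{x}$ and your own convention that units pass through both slots, the step you call ``purely formal'' leaves an uncancelled diagonal term:
\[
\hat{x}\otimes\hat{x}_{k}-q^{-1}\hat{x}_{k}\otimes\hat{x}
=\mathbf{e}_{j}\bigl(\hat{x}_{j}\otimes\hat{x}_{k}-q^{-1}\hat{x}_{k}\otimes\hat{x}_{j}\bigr)
+\mathbf{e}_{k}\bigl(1-q^{-1}\bigr)\hat{x}_{k}\otimes\hat{x}_{k},
\]
and \eqref{R1}, being a relation between the two \emph{distinct} generators, kills only the first bracket; applying it ``slot by slot'' at $j=k$, as you propose, would force $\hat{x}_{k}\otimes\hat{x}_{k}=0$ whenever $q\neq 1$. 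The same defect recurs in the momentum sector: your promotion $\hat{p}_{j}\mapsto\hat{p}$ yields $Df\otimes\hat{p}_{k}$ in \eqref{ro3} only if one also invokes the diagonal instance of \eqref{R2} at $j=k$ to supply the missing piece $(1-q^{-1})\mathbf{e}_{k}\frac{\partial f}{\partial x_{k}}\otimes\hat{p}_{k}$---an instance licensed nowhere in the hypotheses, and inconsistent with the position sector, where the analogous diagonal instance must be forbidden. Finally, your plan for the second identity in \eqref{ro2} (expand $\mathbf{e}_{\beta}f$ via \eqref{clifford} and ``collect the scalar part'' to get the $f$-free source $-i\hbar\delta_{jk}$) is asserted rather than executed, and it is precisely the point where your route would have to work hardest; the paper obtains it instead by a one-line right-multiplication by $\mathbf{e}_{k}$. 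The missing idea, in short, is the single-unit multiplication-and-transfer device: without it your expansions generate diagonal terms $\hat{x}_{k}\otimes\hat{x}_{k}$ and $\hat{p}_{k}\otimes\hat{p}_{k}$ that the given algebra cannot eliminate.
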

\begin{proof}
    To derive the first relation in (\ref{ro1}), we multiply the first relation in (\ref{R1}) on the left by \(\textbf{e}_{j}\), obtaining
\begin{align*}
    \textbf{e}_{j} \hat{x}_{j} \otimes \hat{x}_{k} &= q^{-1} \textbf{e}_{j} \hat{x}_{k} \otimes \hat{x}_{j}, \\
    \textbf{e}_{j} \hat{x}_{j} \otimes \hat{x}_{k} &= q^{-1} \hat{x}_{k} \otimes \textbf{e}_{j} \hat{x}_{j}, \\
    \hat{x} \otimes \hat{x}_{k} &= q^{-1}\hat{x}_{k} \otimes \hat{x}.
\end{align*}

A similar procedure can be applied to obtain the second relation in (\ref{ro1}), this time multiplying on the right by \(\textbf{e}_{k}\). Now to obtain the firts relation in (\ref{ro2}), we consider $f=f_{j}(x)$ and we multiply on the left by $\mathbf{e}_{j}$ resulting in
\begin{align*}
    \mathbf{e}_{j}\hat{x}_{j}\otimes\hat{p}_{k}-q_{jk}\mathbf{e}_{j}\hat{p}_{k}\otimes\hat{x}_{j} & = \ -i\hbar\delta_{jk}f_{j}(x)\mathbf{e}_{j},\\
    \mathbf{e}_{j}\hat{x}_{j}\otimes\hat{p}_{k}-q_{jk}\hat{p}_{k}\otimes\mathbf{e}_{j}\hat{x}_{j} & = \ -i\hbar\delta_{jk}f_{j}(x)\mathbf{e}_{j},\\
    \hat{x}\otimes\hat{p}_{k}-q_{jk}\hat{p}_{k}\otimes\hat{x} & = \ -i\hbar\delta_{jk}f.
\end{align*}
To obtain the second relation in (\ref{ro2}), we apply the same procedure as above, this time multiplying on the right by \(\mathbf{e}_{k}\).  Now to obtain (\ref{ro3}) we consider the Dirac operator defined in (\ref{explicit-Dirac-operator}). Multiplying on the left-hand by $\mathbf{e}_{j}$ in (\ref{R2}) we have
\begin{align*}
 \mathbf{e}_{j}\hat{p}_{j}\otimes\hat{p}_{k}-q^{-1}\mathbf{e}_{j}\hat{p}_{k}\otimes\hat{p}_{j}  & = \ -i\hbar\left(\mathbf{e}_{j}\frac{\partial f}{\partial x_{j}}\otimes\hat{p}_{k}-q^{-1}\mathbf{e}_{j}\frac{\partial f}{\partial x_{k}}\otimes\hat{p}_{j}\right),\\
 \hat{p}\otimes\hat{p}_{k}-q^{-1}\hat{p}_{k}\otimes\hat{p} & = \ -i\hbar\left(Df\otimes\hat{p}_{k}-q^{-1}\frac{\partial f}{\partial x_{k}}\otimes\mathbf{e}_{j}\hat{p}_{j}\right),\\
 & = \ -i\hbar\left(Df\otimes\hat{p}_{k}-q^{-1}\frac{\partial f}{\partial x_{k}}\otimes\hat{p}\right).
\end{align*}
Finally for (\ref{ro4}), we follow the same procedure as before, but this time multiplying on the right by \(\mathbf{e}_{k}\), and noting that the Dirac operator can be expressed as \(D = \mathbf{e}_{k} \frac{\partial f}{\partial x_{k}}\).
\end{proof}
\begin{theorem}
For monogenic functions, i.e., functions satisfying $Df = 0$ in \eqref{ro3} and \eqref{ro4}, 
the expression \eqref{momentum-tensor} holds.
\end{theorem}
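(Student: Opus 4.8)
The plan is to specialize the relations \eqref{ro3} and \eqref{ro4} already established in Proposition \ref{prop} to the monogenic case and then read off the surviving relation among the momentum operators. First I would impose the defining condition of monogenicity, $Df=0$, where $D=\mathbf{e}_{j}\frac{\partial}{\partial x_{j}}+\mathbf{e}_{k}\frac{\partial}{\partial x_{k}}$ is the Dirac operator of \eqref{explicit-Dirac-operator}. Substituting $Df=0$ into the right-hand sides of \eqref{ro3} and \eqref{ro4} annihilates the Dirac-derivative terms, so that the $q$-deformed momentum relations collapse to the homogeneous form
\begin{equation*}
\hat{p}\otimes\hat{p}_{k}-q^{-1}\hat{p}_{k}\otimes\hat{p}=0,\qquad \hat{p}_{j}\otimes\hat{p}-q^{-1}\hat{p}\otimes\hat{p}_{j}=0.
\end{equation*}

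Next I would unwind the Clifford dressing of the Dirac-momentum operator. Recalling from the proof of Proposition \ref{prop} that $\hat{p}=\mathbf{e}_{j}\hat{p}_{j}$ and that the scalar operators $\hat{p}_{j},\hat{p}_{k}$ may be transported across the tensor factors, I would factor the basis element $\mathbf{e}_{j}$ out of each term. Using its invertibility, namely $\mathbf{e}_{j}^{2}=-1$ from \eqref{clifford}, cancelling $\mathbf{e}_{j}$ leaves the undressed relation $\hat{p}_{j}\otimes\hat{p}_{k}=q^{-1}\hat{p}_{k}\otimes\hat{p}_{j}$. Finally, inserting the explicit momentum representation $\hat{p}_{j}=-i\hbar\frac{\partial}{\partial x_{j}}$ and $\hat{p}_{k}=-i\hbar\frac{\partial}{\partial x_{k}}$ and dividing by the common scalar factor $(-i\hbar)^{2}$ yields precisely \eqref{momentum-tensor}.

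The main obstacle I anticipate is justifying that the entire right-hand side of \eqref{ro3}--\eqref{ro4} vanishes under the single condition $Df=0$, rather than only the terms literally labelled $Df$: the brackets also carry factors $\frac{\partial f}{\partial x_{k}}$ and $\frac{\partial f}{\partial x_{j}}$, and one must argue---using the scalar and bivector parts of the monogenicity identity $\mathbf{e}_{j}\frac{\partial f}{\partial x_{j}}+\mathbf{e}_{k}\frac{\partial f}{\partial x_{k}}=0$ together with the companion relations in \eqref{ro3} and \eqref{ro4}---that these residual contributions cancel as well. A secondary technical point is the bookkeeping of the overloaded symbols $\hat{p}$ and $Df$, which denote different Clifford components in \eqref{ro3} versus \eqref{ro4}; care is needed to keep the left- and right-multiplication conventions consistent so that the basis units factor out cleanly and the purely differential relation \eqref{momentum-tensor} emerges without spurious $\mathbf{e}_{j}$, $\mathbf{e}_{k}$ factors.
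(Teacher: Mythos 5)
There is a genuine gap, and it sits exactly at the point you flagged but did not resolve. Your opening step claims that substituting $Df=0$ into the right-hand sides of \eqref{ro3} and \eqref{ro4} collapses them to the homogeneous relations. It does not: in \eqref{ro3} only the term $Df\otimes\hat{p}_{k}$ is annihilated, leaving $+i\hbar q^{-1}\frac{\partial f}{\partial x_{k}}\otimes\hat{p}$, and in \eqref{ro4} the term $\frac{\partial f}{\partial x_{j}}\otimes\hat{p}$ survives likewise. Monogenicity constrains the partial derivatives only jointly, via $\mathbf{e}_{j}\frac{\partial f}{\partial x_{j}}+\mathbf{e}_{k}\frac{\partial f}{\partial x_{k}}=0$; it does not make them vanish individually. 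A concrete counterexample to your hoped-for cancellation: take $f=x_{k}\mathbf{e}_{j}+x_{j}\mathbf{e}_{k}$, so that $\frac{\partial f}{\partial x_{j}}=\mathbf{e}_{k}$, $\frac{\partial f}{\partial x_{k}}=\mathbf{e}_{j}$, and by \eqref{clifford} with $j\neq k$,
\begin{equation*}
Df=\mathbf{e}_{j}\mathbf{e}_{k}+\mathbf{e}_{k}\mathbf{e}_{j}=0,
\end{equation*}
yet $\frac{\partial f}{\partial x_{k}}\otimes\hat{p}=\mathbf{e}_{j}\otimes\hat{p}\neq 0$. So for a general monogenic $f$ the residual terms genuinely survive, and no argument from the ``scalar and bivector parts'' of the monogenicity identity can remove them: your anticipated obstacle is not a technicality to be smoothed over but the failure point of the strategy.

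The paper's own proof avoids this issue by a much blunter move: it does not work with a general monogenic $f$ at all, but sets $f=0\mathbf{e}_{j}+0\mathbf{e}_{k}$, i.e.\ $f\equiv 0$, so that $Df=0$ and $\frac{\partial f}{\partial x_{j}}=\frac{\partial f}{\partial x_{k}}=0$ hold trivially and both brackets vanish termwise. Only then does it insert $\hat{p}=-i\hbar\mathbf{e}_{j}\frac{\partial}{\partial x_{j}}$ (respectively $\hat{p}=-i\hbar\mathbf{e}_{k}\frac{\partial}{\partial x_{k}}$) and strip the Clifford factor to recover \eqref{momentum-tensor} --- a step your second paragraph reproduces essentially verbatim, and where your explicit appeal to the invertibility $\mathbf{e}_{j}^{2}=-1$ is actually more careful than the paper. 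In short: your undressing argument is fine, but your attempt to prove the theorem for all functions with $Df=0$ cannot be completed as sketched, whereas the paper's proof succeeds only because it restricts (without emphasis) to the trivial monogenic function $f=0$. To repair your write-up in the paper's spirit, you must either assume $f\equiv 0$ as the paper does, or strengthen the hypothesis to the vanishing of both partial derivatives, since $Df=0$ alone is demonstrably insufficient.
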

\begin{proof}
Assume $f = 0\mathbf{e}_{j} + 0\mathbf{e}_{k}$. Then, $Df = 0$ and 
\[
\frac{\partial f}{\partial x_{j}} = \frac{\partial f}{\partial x_{k}} = 0.
\]
Substituting these into \eqref{ro3} and \eqref{ro4} yields
\[
\hat{p} \otimes \hat{p}_{k} - q^{-1}\hat{p}_{k} \otimes \hat{p} = 0, \quad 
\hat{p}_{j} \otimes \hat{p} - q^{-1}\hat{p} \otimes \hat{p}_{j} = 0,
\]
and taking into account that $\hat{p}=-i\hbar\mathbf{e}_{j}\frac{\partial}{\partial x_{j}}$ and $\hat{p}=-i\hbar\mathbf{e}_{k}\frac{\partial}{\partial x_{k}}$, then we obtain \eqref{momentum-tensor} \qedhere.
\end{proof}

\begin{remark}\rm{\cite{Jaramillo2024}}
    The above treatment can be applied to right-non-monogenic functions, $fD\neq 0$.
\end{remark}

In the following section, we will propose the corresponding 
$q$-deformed Heisenberg algebra in the space $\otimes^{2}$.

\begin{proposition}\label{prop1}
The $q$-Heisenberg algebra is defined by the set of generators $\hat{\textit{\textbf{x}}},\hat{\textit{\textbf{p}}}$, where $\hat{\textit{\textbf{x}}}=\hat{x}_{j}\mathrm{e}_{k}+\hat{x}_{k}\mathrm{e}_{j},\hat{\textit{\textbf{p}}}=\hat{p}_{j}\mathrm{e}_{k}+\hat{p}_{k}\mathrm{e}_{j}$, and $f=f_{j}(x_{j},x_{k})\mathrm{e}_{k}+f_{k}(x_{j},x_{k})\mathrm{e}_{j}$, the difference operator \rm(\ref{op}), and is subject to following relations:
\begin{align}
    \label{t1a}\hat{\textit{\textbf{x}}}\otimes\hat{p}_{k}-q_{jk}\hat{p}_{k}\otimes\hat{\textit{\textbf{x}}} & = \ - i\hbar\delta_{jk}f,\\
    \label{t2+a}\hat{x}_{j}\otimes\hat{\textit{\textbf{p}}}-q_{jk}\hat{\textit{\textbf{p}}}\otimes\hat{x}_{j} & = \ - i\hbar\delta_{jk}f,\\
    \label{t3}\mathrm{e}_{k}\hat{p}_{j}\otimes\hat{p}_{k}-q^{-1}_{jk}\hat{p}_{k}\otimes\mathrm{e}_{k}\hat{p}_{j}  & = \ -i\hbar\left(\mathrm{e}_{k}\frac{\partial f}{\partial x_{j}}\otimes\hat{p}_{k}- q^{-1}_{jk}\mathrm{e}_{k}\frac{\partial f}{\partial x_{k}}\otimes\hat{p}_{j}\right),\\
    \label{t4}\hat{p}_{j}\otimes\mathrm{e}_{j}\hat{p}_{k}-q^{-1}_{jk}\mathrm{e}_{j}\hat{p}_{k}\otimes\hat{p}_{j} & = \ -i\hbar\left(\mathrm{e}_{j}\frac{\partial f}{\partial x_{j}}\otimes\hat{p}_{k}-q^{-1}_{jk}\mathrm{e}_{j}\frac{\partial f}{\partial x_{k}}\otimes\hat{p}_{j}\right).
\end{align}
\end{proposition}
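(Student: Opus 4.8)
The plan is to mirror the derivation of Proposition~\ref{prop}: every relation in the statement will be obtained from the scalar-valued defining relations \eqref{R1}--\eqref{R2} (and the middle position--momentum relation of the $f$-version of the $q$-Heisenberg definition) by multiplying on the left or right by the Clifford basis elements $\mathrm{e}_{j}$ and $\mathrm{e}_{k}$, using that these elements commute with the coordinate and momentum operators and may be transported across the tensor product (exactly the move $\mathrm{e}_{j}\hat{x}_{k}\otimes\hat{x}_{j}=\hat{x}_{k}\otimes\mathrm{e}_{j}\hat{x}_{j}$ used in Proposition~\ref{prop}). The genuinely new feature is that the building blocks $\hat{\textbf{x}}=\hat{x}_{j}\mathrm{e}_{k}+\hat{x}_{k}\mathrm{e}_{j}$, $\hat{\textbf{p}}=\hat{p}_{j}\mathrm{e}_{k}+\hat{p}_{k}\mathrm{e}_{j}$, and $f=f_{j}\mathrm{e}_{k}+f_{k}\mathrm{e}_{j}$ each carry a Clifford tag, so two index-copies of a base relation must be combined to assemble each bold generator.

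For \eqref{t1a} I would take the mixed position--momentum relation in its two relevant index forms, right-multiply the position factors by $\mathrm{e}_{k}$ in one copy and by $\mathrm{e}_{j}$ in the other, and add. On the left-hand side the two dressed position operators combine into $\hat{\textbf{x}}$ in both tensor slots, producing $\hat{\textbf{x}}\otimes\hat{p}_{k}-q_{jk}\hat{p}_{k}\otimes\hat{\textbf{x}}$, while the right-hand side collects into $-i\hbar\delta_{jk}(f_{j}\mathrm{e}_{k}+f_{k}\mathrm{e}_{j})=-i\hbar\delta_{jk}f$. Relation \eqref{t2+a} is the mirror computation: one keeps $\hat{x}_{j}$ fixed and instead dresses the momentum factors, so that $\hat{p}_{j}$ and $\hat{p}_{k}$ combine into $\hat{\textbf{p}}$, again recovering the stated right-hand side.

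For \eqref{t3} and \eqref{t4} I would start from the momentum--momentum relation \eqref{R2}. Left-multiplying \eqref{R2} by $\mathrm{e}_{k}$ and transporting $\mathrm{e}_{k}$ onto the $\hat{p}_{j}$ factors gives $\mathrm{e}_{k}\hat{p}_{j}\otimes\hat{p}_{k}-q^{-1}_{jk}\hat{p}_{k}\otimes\mathrm{e}_{k}\hat{p}_{j}$ on the left and the tagged derivative terms $\mathrm{e}_{k}\frac{\partial f}{\partial x_{j}}$ and $\mathrm{e}_{k}\frac{\partial f}{\partial x_{k}}$ on the right, which is \eqref{t3}; left-multiplication by $\mathrm{e}_{j}$, placed on the appropriate $\hat{p}_{k}$ factor, yields \eqref{t4}. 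At this stage the combination $\mathrm{e}_{j}\frac{\partial}{\partial x_{k}}+\mathrm{e}_{k}\frac{\partial}{\partial x_{j}}$ appearing across \eqref{t3}--\eqref{t4} is precisely the difference operator \eqref{op}, which is how that operator enters the statement.

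The main obstacle is the bookkeeping of Clifford non-commutativity rather than any analytic difficulty. Because the $\mathrm{e}$'s satisfy the anticommutation rules \eqref{rel1+a} (equivalently \eqref{clifford}), I must track carefully which slot of the tensor product each tag is transported into, and the order in which tags multiply onto $\hat{x}_{j}\mathrm{e}_{k}$ versus $\mathrm{e}_{k}\hat{x}_{j}$; a sign slip here would spoil the assembly of $\hat{\textbf{x}}$ and $\hat{\textbf{p}}$. I would also check that the deformation factors stay consistent---$q_{jk}$ in the position--momentum relations and $q^{-1}_{jk}$ in the momentum--momentum relations---so that the two index-copies combine without introducing spurious $q$-powers. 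Once these conventions are fixed consistently with those of Proposition~\ref{prop}, the four relations follow by direct substitution.
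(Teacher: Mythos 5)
Your proposal is correct and takes essentially the same route as the paper: the paper proves \eqref{t1a} by adding and subtracting the $\mathrm{e}_{j}\hat{x}_{k}$-tagged terms inside one dressed copy of the mixed relation, which is precisely your device of summing the two Clifford-tagged index-copies so that $\hat{\textit{\textbf{x}}}$ and $f=f_{j}\mathrm{e}_{k}+f_{k}\mathrm{e}_{j}$ assemble, with \eqref{t2+a} as the mirror computation on the momenta. For \eqref{t3} and \eqref{t4} the paper does exactly what you describe, multiplying \eqref{R2} by $\mathrm{e}_{k}$, respectively $\mathrm{e}_{j}$, and transporting the tag across the tensor product onto the appropriate momentum factor.
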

\begin{proof}
    For (\ref{t1a}), we have that
    \begin{align*}
        i\hbar\delta_{jk}\mathrm{e}_{k}f_{j}& = \ \mathrm{e}_{k}\hat{x}_{j}\otimes\hat{p}_{k}-q_{jk}\hat{p}_{k}\otimes\mathrm{e}_{k}\hat{x}_{j},\\
        & = \ \mathrm{e}_{k}\hat{x}_{j}\otimes\hat{p}_{k}-q_{jk}\hat{p}_{k}\otimes\mathrm{e}_{k}\hat{x}_{j} +\mathrm{e}_{j}\hat{x}_{k}\otimes\hat{p}_{k}-\mathrm{e}_{j}\hat{x}_{k}\otimes\hat{p}_{k}-q_{jk}\hat{p}_{k}\otimes\mathrm{e}_{j}\hat{x}_{k}+q_{jk}\hat{p}_{k}\otimes\mathrm{e}_{j}\hat{x}_{k},\\& = \ (\mathrm{e}_{j}\hat{x}_{k}+\mathrm{e}_{k}\hat{x}_{j})\otimes\hat{p}_{k}-q_{jk}\hat{p}_{k}\otimes (\mathrm{e}_{j}\hat{x}_{k}+\mathrm{e}_{k}\hat{x}_{j})-\mathrm{e}_{j}\hat{x}_{k}\otimes\hat{p}_{k}+q_{jk}\hat{p}_{k}\otimes\mathrm{e}_{j}\hat{x}_{k} ,\\
        & = \ \hat{\textit{\textbf{x}}}\otimes\hat{p}_{k}-q_{jk}\hat{p}_{k}\otimes\hat{\textit{\textbf{x}}}-\mathrm{e}_{j}\hat{x}_{k}\otimes\hat{p}_{k}+q_{jk}\hat{p}_{k}\otimes\mathrm{e}_{j}\hat{x}_{k}.
  \end{align*}
  
  Taking into account that $-\mathrm{e}_{j}\hat{x}_{k}\otimes\hat{p}_{k}+q_{jk}\hat{p}_{k}\otimes\mathrm{e}_{j}\hat{x}_{k}=i\hbar\mathrm{e}_{j}f_{k}$, the above expression can be written of the following form
  \begin{equation*}
      \hat{\textit{\textbf{x}}}\otimes\hat{p}_{k}-q_{jk}\hat{p}_{k}\otimes\hat{\textit{\textbf{x}}}+i\hbar\delta_{jk}\mathrm{e}_{j}f_{k} = -i\hbar\delta_{jk}\mathrm{e}_{k}f_{j},
  \end{equation*}
  therefore we have
  \begin{equation*}
      \hat{\textit{\textbf{x}}}\otimes\hat{p}_{k}-q_{jk}\hat{p}_{k}\otimes\hat{\textit{\textbf{x}}}=-i\hbar\delta_{jk} f.
  \end{equation*}
  A similar procedure can be used to obtain (\ref{t2+a}). Now for (\ref{t3}) we have the following calculation
  \begin{align*}
      \mathrm{e}_{k}\hat{p}_{j}\otimes\hat{p}_{k}-q^{-1}_{jk}\hat{p}_{k}\otimes\mathrm{e}_{k}\hat{p}_{j} & = \ -i\hbar\left(\mathrm{e}_{k}\frac{\partial f}{\partial x_{j}}\otimes\hat{p}_{k}-q^{-1}_{jk}\mathrm{e}_{k}\frac{\partial f}{\partial x_{k}}\otimes\hat{p}_{j}\right),\\
      & = \ -i\hbar\mathrm{e}_{k}\frac{\partial f}{\partial x_{j}}\otimes\hat{p}_{k}+i\hbar q^{-1}_{jk}\mathrm{e}_{k}\frac{\partial f}{\partial x_{k}}\otimes\hat{p}_{j}.
      \end{align*}   
Now, for (\ref{t4}), we proceed in a similar manner as in the previous case, starting from the expression
\begin{equation*}
 \hat{p}_{j}\otimes\mathrm{e}_{j}\hat{p}_{k}-q^{-1}_{jk}\mathrm{e}_{j}\hat{p}_{k}\otimes\hat{p}_{j}=-i\hbar\left(\mathrm{e}_{j}\frac{\partial f}{\partial x_{j}}\otimes\hat{p}_{k}-q^{-1}_{jk}\mathrm{e}_{j}\frac{\partial f}{\partial x_{k}}\otimes\hat{p}_{j}\right).   
\end{equation*}
\end{proof}

The Propostion \ref{prop1}  plays a crucial role in the framework of noncommutative geometry and quantum algebra for several reasons:
\begin{itemize}
\item[(i)] \textit{\textbf{The extension of the classical Heisenberg algebra}}: By deforming the canonical commutation relations, this formulation allows the study of quantum systems on noncommutative spaces where the usual assumptions of coordinate-momentum duality no longer hold.
\item[(ii)] \textit{\textbf{Bridge to quantum geometry}}: The appearance of \( q \)-parameters and functionally dependent operators reflects an underlying geometric deformation, compatible with quantum groups and quantum planes, such as those appearing in Manin's formulation (see Definition \ref{manim-def}).
\item[(iii)]  \textit{\textbf{Foundations for noncommutative functional analysis}}: The modified commutators provide a natural starting point for constructing differential and integral calculus on \( q \)-deformed spaces, including Jackson derivatives and generalized Dirac operators.
\item[(iv)] \textit{\textbf{Relevance to high-energy physics and quantum gravity}}: Since the proposition embeds a minimal length scale through deformation parameters, it may offer insights into quantum field theories with ultraviolet regularization, and models of spacetime with discrete or quantum structure.
\item[(v)] \textit{\textbf{Structural compatibility with Clifford-type deformations}}: The tensorial formalism used in this proposition makes it compatible with further deformations, such as \( q \)-Clifford algebras, which are essential in describing spinorial structures and supersymmetric extensions in a noncommutative setting.
\end{itemize}
Therefore, Proposition \ref{prop1} lays the groundwork for a consistent and physically meaningful extension of the Heisenberg algebra, opening a pathway for future developments in algebraic methods of quantum physics and non-commutative geometry.

\bibliographystyle{plain} 
\bibliography{myBiblib}
\end{document}